\documentclass[conference,compsoc]{IEEEtran}

\usepackage{graphicx} 
\usepackage{algorithm}
\usepackage{algorithmic}

\usepackage{amsthm}
\usepackage{amsmath}
\usepackage{amssymb}
\theoremstyle{definition}
\newtheorem{definition}{Definition}[section] 
\newtheorem{theorem}{Theorem}[section]

\DeclareMathOperator{\Hash}{Hash}
\DeclareMathOperator{\Gen}{Gen}
\DeclareMathOperator{\secret}{\textit{secret}}
\DeclareMathOperator{\AdvHash}{AdvHash}

\usepackage{hyperref}
\usepackage{xcolor}

\ifCLASSOPTIONcompsoc
 
  \usepackage[nocompress]{cite}
\else
 
  \usepackage{cite}
\fi

\ifCLASSINFOpdf
 
\else
 
\fi

\hyphenation{op-tical net-works semi-conduc-tor}

\begin{document}
\title{Watermarking Generative Categorical Data}

\title{Watermarking Generative Categorical Data}

\author{
\IEEEauthorblockN{Bochao Gu\textsuperscript{*}}
\IEEEauthorblockA{Department of Mathematics\\
University of California, Los Angeles\\
Los Angeles, California 90024\\
Email: \texttt{billgubochao@g.ucla.edu}}
\and
\IEEEauthorblockN{Hengzhi He\textsuperscript{*}}
\IEEEauthorblockA{Department of Statistics\\
University of California, Los Angeles\\
Los Angeles, California 90024\\
Email: \texttt{hengzhihe@g.ucla.edu}}
\and
\IEEEauthorblockN{Guang Cheng}
\IEEEauthorblockA{Department of Statistics\\
University of California, Los Angeles\\
Los Angeles, California 90024\\
Email: \texttt{guangcheng@ucla.edu}}
}

\maketitle

\begingroup
\renewcommand\thefootnote{\textsuperscript{*}}
\footnotetext{Bochao Gu and Hengzhi He contributed equally to this work.}
\endgroup

\begin{abstract}
In this paper, we propose a novel statistical framework for watermarking generative {\em categorical} data. Our method systematically embeds pre-agreed secret signals by splitting the data distribution into two components and modifying one distribution based on a deterministic relationship with the other, ensuring the watermark is embedded at the distribution-level. To verify the watermark, we introduce an insertion inverse algorithm and detect its presence by measuring the total variation distance between the inverse-decoded data and the original distribution. Unlike previous categorical watermarking methods, which primarily focus on embedding watermarks into a given dataset, our approach operates at the distribution-level, allowing for verification from a statistical distributional perspective. This makes it particularly well-suited for the modern paradigm of synthetic data generation, where the underlying data distribution, rather than specific data points, is of primary importance. The effectiveness of our method is demonstrated through both theoretical analysis and empirical validation.
\end{abstract}

\IEEEpeerreviewmaketitle

\section{Introduction}
The rapid development of generative models has led to significant advancements in image (\cite{goodfellow2014generative,song2020score,song2023consistency}) and text generation (\cite{achiam2023gpt,dubey2024llama,li2024pre,yu2022latent}), where they have found important applications. These models are also being used to generate high-quality synthetic tabular data, opening new possibilities in this domain (\cite{xu2019modeling,kotelnikov2023tabddpm,zhangmixed,ouyang2023missdiff,kinakh2024tabular}). Models such as CTGAN (\cite{xu2019modeling}), TabDDPM (\cite{kotelnikov2023tabddpm}), and TabSyn (\cite{zhangmixed}) have demonstrated the capability to produce datasets that closely resemble real data, including finance (\cite{potluru2023synthetic}) and health care (\cite{chen2021synthetic}). However, the increasing use of AI-generated data has raised critical concerns about distinguishing it from authentic content and identifying its source. Failing to address these issues can lead to significant problems, including copyright infringement and the spread of misinformation. These concerns have driven regulatory bodies at both national and international levels to take action. Notably, recent executive directives from the White House \footnote{
\href{https://www.whitehouse.gov/briefing-room/presidential-actions/2023/10/30/executive-order-on-the-safe-secure-and-trustworthy-development-and-use-of-artificial-intelligence/}{https://www.whitehouse.gov/briefing-room/presidential-actions/2023/10/30/executive-order-on-the-safe-secure-and-trustworthy-development-and-use-of-artificial-intelligence/}} and the European Union's proposed Artificial Intelligence Act \footnote{
\href{https://artificialintelligenceact.eu/wp-content/uploads/2024/02/AIA-Trilogue-Committee.pdf}{https://artificialintelligenceact.eu/wp-content/uploads/2024/02/AIA-Trilogue-Committee.pdf}} emphasize the need for robust mechanisms to ensure that AI-generated content can be identified and traced back to its origins. 

To achieve these goals, watermarking techniques present an effective solution. Watermarks have already played a crucial role in tracing the provenance of content in image (\cite{wen2023treerings,zhao2023provable}) and text domains (\cite{kirchenbauer2023watermark,christ2024undetectable}). However, the application of watermarking techniques in generative tabular data remains limited. 

Recent studies, such as \cite{he2024watermarking,zheng2024tabularmark,tang2024ripple}, have explored watermarking techniques for generative numerical data, but there has been relatively little focus on generative categorical data. Although extensive research on watermarking techniques for categorical data exists in the database field, most studies focus on static tables or tables with fixed content. For example, the watermarking method proposed by \cite{li2004tamper} embeds the watermark by arranging rows in a specific sequence, so that the order of data rows carries the watermark. Another approach by \cite{agrawal2002watermarking} assumes each row has a unique primary key, with the watermark deterministically added based on this key. While these methods demonstrate robustness in various contexts, they may not be directly applicable in the era of generative data, where traditional watermarking techniques face new challenges. Unlike static datasets, data in the era of generative models is highly susceptible to various modifications that can disrupt embedded watermarks. These include reordering, cropping, and other similar modifications. Additionally, generative models introduce a unique challenge: they can learn the underlying distribution of the original data and use this learned structure to generate entirely new datasets that are statistically similar but fundamentally different in content, further complicating watermark preservation. 

In this scenario, ensuring that watermarks can be accurately detected—even after the data has been learned and regenerated by a generative model—presents a significant challenge. Developing watermarking techniques that can withstand these transformations and reliably verify the origin of generative data with high accuracy is therefore an open and critical problem in the field.

In this paper, we address the unique challenges of watermarking in the era of generative data by developing a distribution-level watermarking approach tailored for categorical data. Our main contributions are as follows:

\begin{itemize}
    \item \textbf{A Novel Distribution-Level Watermarking Framework}: We propose a statistical framework that systematically embeds watermark signals at the distribution-level. By strategically adjusting the original data distribution in a controlled manner, our approach enables reliable watermark embedding while maintaining the natural appearance of the synthetic data.
    
    \item \textbf{Robust Verification Through Hypothesis Testing}: To verify the presence of the watermark, we introduce an inverse decoding process and employ a statistical hypothesis testing method that measures the total variation distance between the inverse-decoded data and the original data distribution. This approach allows for robust verification of watermarks from a distributional perspective, making it resilient to transformations commonly encountered in generative models.
    
    \item \textbf{Theoretical and Empirical Validation}: We provide both theoretical analysis and empirical validation to demonstrate the effectiveness and robustness of our method. Through rigorous testing, we show that our approach reliably detects embedded watermarks even when the synthetic data undergoes regeneration, demonstrating its suitability for practical applications.
    
    \item \textbf{Post-Processing Compatibility with Various Generative Models}: Our framework is designed as a post-processing method that does not rely on any specific generative model architecture. This makes it adaptable to a wide range of generative models, allowing watermarks to be embedded and verified regardless of the underlying model used to generate the data.
\end{itemize}

The remainder of the paper is organized as follows. Section 2 introduces related work, which serves as the basis for our approach. Section 3 outlines our problem setup and watermarking scheme. Section 4 presents our empirical results, and Section 5 introduces two advanced methods that build upon the foundational techniques discussed in Section 3.

\section{Related Work}
\textbf{LLM Watermark} Recent work has explored watermarking techniques for large language models. One prominent approach is the ``green-red list" based watermark, introduced by \cite{kirchenbauer2023watermark}. This method partitions the vocabulary into ``green" and "red" lists. During each token generation step, the model is biased to sample primarily or exclusively from the green list. To detect the watermark, the proportion of green tokens in the generated text is measured—an unusually high ratio of green tokens indicates the presence of a watermark, as natural text would not typically exhibit this distribution. Subsequent works, such as \cite{zhao2023provable, li2023plmmark, fernandez2023three}, have further developed and expanded upon this approach.

Another representative approach for watermarking large language models is the Gumbel Watermark, introduced by \cite{aaronson2023watermarking}. This method leverages exponential minimal sampling technique to subtly encode watermarks within the generated text. By using a predetermined secret key to guide pseudo-random sampling, the Gumbel Watermark embeds a unique statistical signature detectable in the text output. This technique enables robust watermarking while maintaining the natural quality of the generated content. Subsequent works, such as \cite{zhao2024permute} and \cite{fu-etal-2024-gumbelsoft}, have further refined and extended this methodology.

\textbf{Image Watermark}
In recent years, watermarking generative image data has garnered significant interest. For example, \cite{wen2024tree} introduces a watermarking technique specifically for images generated by diffusion models, embedding watermarks into the initial noise vector during sampling. Similarly, \cite{fernandez2023stable} explores watermarking for images produced by diffusion models. However, unlike \cite{wen2024tree}, this approach fine-tunes the model's decoder to embed watermarks into the generated images, rather than modifying the initial noise directly.

\textbf{Tabular data watermark}
Effective watermarking techniques for generative tabular data remain a relatively underexplored area, though database researchers have developed watermarking schemes aimed at proving ownership of static tabular data, typically relying on the dataset's primary keys. The primary key, serving as a unique identifier for each sample in a dataset to ideally ensure distinctness, is commonly used in many watermarking schemes. For example, \cite{agrawal2002watermarking} developed a watermarking framework that sparsely modifies the least significant bits (LSB) of certain numerical features. The selection of which bits to modify is based on the primary key combined with a secure hash function. Such framework is robust under malicious attack such as sorting, scaling, and bit-flipping. 

Building on the watermarking framework established by \cite{agrawal2002watermarking}, \cite{li2005fingerprinting} introduced a fingerprinting approach that enables a data owner to create multiple watermarked copies of tabular datasets, each uniquely marked and distributable to different recipients. In this approach, Li defines a predefined binary sequence as a watermark, which is then used to perturb the dataset in accordance with the sequence. By predefining multiple watermark sequences, the data owner can produce distinct watermarked versions of the dataset. Building on both these frameworks, \cite{xiao2007second} proposed using both the least significant bit (LSB) and the second-LSB to sparsely perturb the dataset, allowing for more subtle modifications.
In cases where a primary key is absent, \cite{li2003constructing} developed methods to construct a virtual primary key using other numerical features of the dataset. To address potential issues of duplicate values in a virtual primary key, \cite{gort2020double} presented a framework for handling such challenges.
Additionally, watermarking schemes like \cite{li2004tamper} employ the hash value of the primary key along with a predefined binary sequence to reorder the sequence of samples, preserving the original data distribution. Such schemes are primarily aimed at localizing malicious attacks without altering the overall data distribution. However, these approaches are designed specifically for fixed tabular data and do not incorporate distribution-level considerations. Consequently, they depend heavily on primary keys, which are often impractical to preserve in data synthesizers. At the same time \cite{sebe2005noise} and \cite{ren2023robust} are watermarking schemes for continuous tabular data that do not rely on primary key, but their methods are challenging to extend to categroical data. 

Recently, \cite{he2024watermarking}, \cite{zheng2024tabularmark}, and \cite{ngo2024adaptive} proposed distribution-level watermarking schemes for continuous tabular data. While these approaches, like ours, embed watermarks at the distribution-level, they rely heavily on the continuous nature of the data and cannot be applied directly to discrete categorical data.

\section{Method}
\subsection{Problem Set Up and Notations}

Assume the data owner has collected some private categorical samples and organized them into a table, \( T_{ori} \). The empirical distribution of this table, \( T_{ori} \), is denoted by \( D_{ori} \). Meanwhile, the data owner also has access to a tabular data synthesizer, \( S \), which has been trained on \( D_{ori} \) and produces synthetic samples following a distribution denoted as \( D \). We assume that \( S \) preserves the fidelity well, so \( D \) closely resembles \( D_{ori} \).

In this paper, we assume that the data owner will only sell synthetic tables drawn from \( D \) to buyers. To prove ownership, the data owner perturbs the distribution \( D \) with a secret parameter, creating a modified distribution \( \hat{D}_{\text{secret}} \) that is computationally challenging to reverse back to \( D \) without knowledge of this secret. The data owner first samples a table \( T \) from \( D \) and then applies an insertion algorithm to transform \( T \) into a watermarked table, \( \hat{T}_{\text{secret}} \), with the underlying distribution \( \hat{D}_{\text{secret}} \). Different secrets can be assigned to different buyers.

When a suspicious table \( T' \) arises that may be an illicit copy of \( \hat{D}_{\text{secret}} \), the owner can analyze its underlying distribution \( D' \) and apply an inverse insertion algorithm using the secret. If the algorithm successfully recovers \( D \), it confirms that \( T' \) is indeed an illicit copy; otherwise, it is not. Notably, the data owner only needs \( T \) when inserting the watermark and \( T_{ori} \) during detection. In practice, the data owner only retains \( T_{ori} \) and a list of secrets assigned to various buyers.

Assume the data owner has a synthetic table represented as a matrix \( T \), with each row being an independent synthetic sample drawn from \( D \). Let \( D \) have dimension \( m + n \). We then split \( T \) column-wise into two data sets, \( T_x \) and \( T_y \). For instance, \( T_x \) includes columns 1 through \( m \), and \( T_y \) includes columns \( m + 1 \) through \( m + z \). Consequently, we can denote two distributions, \( X \) and \( Y \), from which samples in \( T_x \) and \( T_y \) are drawn. Clearly, \( X \) and \( Y \) are not independent unless there are two sets of independent features in \( D \).

A key component of our watermarking scheme is a one-way hash function, denoted by $\Hash_{\alpha}^{\secret}(\cdot)$. The domain of $\Hash_{\alpha}^{\secret}(\cdot)$ is a vector of any finite dimension, and its output is an integer between $0$ and $\alpha - 1$. The parameter $\secret$ secures this hash function, making it computationally difficult to deduce the input from the output without knowing $\secret$. In this paper, we primarily use $\Hash_{\#(Y)}^{\secret}$, where $\#(Y)$ represents the number of unique categories in the distribution $Y$. Later, in section 4.1, we will discuss how to implement such hash function in Python.

Since \( D \), \( X \), and \( Y \) are all categorical distributions with finite supports, we can always construct an injective mapping that assigns each category of a categorical distribution to a unique non-negative integer. Let \( M_D(\cdot) \), \( M_X(\cdot) \), and \( M_Y(\cdot) \) denote such mappings for distributions \( D \), \( X \), and \( Y \), respectively. Similarly, let \( M_D^{-1}(\cdot) \), \( M_X^{-1}(\cdot) \), and \( M_Y^{-1}(\cdot) \) denote the inverse mappings, which map integers in the appropriate domains back to their original categories. Notice that the range of $M_Y(\cdot)$ are precisely integers between $0$ and $\#(Y) - 1$. In our watermarking scheme, we will use a tabular table to construct these pairs of injective mappings. In practice, the data owner doesn't have specific details of all distributions, so the construction of such injective mapping will base on the tabular table. Particularly, we use $M_X(\cdot), M_X^{-1}(\cdot) =$ \texttt{Construct($T_x$)} to denote this process. However, if there exists a sample $x$ that does not appear in any row of $T_x$, we need to expand the domain to ensure that $M_X(\cdot)$ considers $x$ as a valid input. We denote this process as $(M_X(\cdot), M_X^{-1}(\cdot)).\texttt{update}(x)$.

Notice that if we sample many data points from the distribution $X$ and put each data point into our hash function, then we obtain a new distribution of the output. Specifically, we denote this new distribution as $\Hash_{\#(Y)}^{\secret}(X)$. Then, applying $M_Y^{-1}$ on $\Hash_{\#(Y)}^{\secret}(X)$, we obtained a new distribution that share the same support with $Y$. In the rest of this project, we denote
$$ \tilde{Y}_{\secret} := M_Y^{-1}(\Hash_{\#(Y)}^{\secret}(X)).$$

Let $p_w \in (0,1)$ be a parameter predefined by the data owner. This parameter controls the intensity of the watermark. A greater value of $p_w$ leads to more perturbation of $D$, while a smaller $p_w$ corresponds to less perturbation. Later, we will show that our insertion algorithm turned table $T_y$ into a new watermarked version $\hat{T}_{y, \secret}$, and the underlying distribution of $\hat{T}_{y, \secret}$ is 
\[
\begin{cases}
    \hat{Y}_{\secret} \stackrel{d}{=} \tilde{Y}^{\secret} \text{ with probability $p_w$} \\
    \hat{Y}_{\secret} \stackrel{d}{=} Y \text{ with probability $1-p_w$}.
\end{cases}
\]
If we merge the distribution $X$ with the distribution $\hat{Y}_{\secret}$ in the same way that we originally split $X$ and $Y$ from $D$, we obtain a new distribution $\hat{D}_{\secret}$.

In conclusion, we summarize all the notation in table 1. 

\begin{table}[h]
    \centering
    \begin{tabular}{{|p{3cm}|p{4cm}|}}
        \hline
        $\secret$ & a secret string only data owner has \\
        \hline
        $T_{ori}$ & The original tabular data owner has \\
        \hline
        $T$ & Table of Synthetic Data  \\
        \hline
        $\hat{T}_{\secret}$ & Watermarked table using $\secret$ \\
        \hline
        $T'$ & Suspicious table that might be an illicit copy \\
        \hline
        $D_{ori}$ & The underlying Distribution of $T_{ori}$\\
        \hline
        $D$ & Distribution of $T$\\
        \hline
        $\hat{D}_{\secret}$ & Distribution of $T_{\secret}$ \\
        \hline
        $D'$ & The underlying Distribution of $T'$ \\
        \hline
        $T_{ori,x}$, $T_{ori, y}$ & Column-wise partition of the $T_{ori}$ \\
        \hline
        $T_x, T_y$ & Column-wise partition of $T$\\
        \hline
        $T'_x, T'_y$ & Column-wise partition of $T'$ \\
        \hline
        $X$, $Y$ &  Distribution of $T_x$ and $T_y$ \\
        \hline
        $\hat{Y}_{\secret}$ & Watermarked distribution of $Y$ using $\secret$ \\
        \hline
        $\Hash_{\#(Y)}^{\secret}(x)$ &  one way hash function based on $\secret$ that output an integer \\
        \hline
        $p_w$ &  Parameter that controls how dense the watermark would be \\
        \hline
        $\hat{D}_{\secret}$ & Distribution of watermarked data \\
        \hline
        $\hat{T}_{y, \secret}$ & Watermarked version of $T_y$ \\
        \hline
        $\hat{Y}_{\secret}$ &  Distribution of $\hat{T}_{y, \secret}$\\
        \hline
        $M_D(\cdot), M_X(\cdot), M_Y(\cdot)$ & To-integer mapping of $D$, $X$, and $Y$ \\
        \hline
        $M_D^{-1}(\cdot), M_X^{-1}(\cdot), M_Y^{-1}(\cdot)$ & To-category mapping of $D$, $X$, and $Y$\\
        \hline
        \texttt{Construct}, \texttt{update} &  Function that construct $M_D(\cdot)$ and $M_D^{-1}(\cdot)$ \\
        \hline
        $\tilde{Y}^{\secret}$ & Distribution used to generate $\hat{Y}_{\secret}$\\
        \hline
        
    \end{tabular}
    \caption{Notations}
    \label{tab:sample_table}
\end{table}

\subsection{Watermark Insertion}

\begin{algorithm}[H]
    \caption{Inserter$(T; T_{ori}, p_w, \Hash_{\#(Y)}^{\secret})$}
    \begin{algorithmic}[1]
    
    \STATE Construct a Bernoulli variable $B$ independent of the dataset, with $\Pr(B = 0) = p_w$.
    \STATE Split $T$ column-wise into $T_x$ and $T_y$
    \STATE Split $T_{ori}$ column-wise in the same way to form $T_{ori, x}$ and $T_{ori, y}$
    \STATE Let $M_Y(\cdot), M_Y^{-1}(\cdot) = $\texttt{Construct($T_{ori,y}$)}
    \FOR{Each sample $(x_i,y_i)$ in the $T$}
        \STATE Draw a sample $b_i$ from $B$
        \IF{$b_i = 0$}
            \STATE change $y_i$ to $M_Y^{-1}(\Hash_{\#(Y)}^{\secret}(x_i))$
        \ENDIF
    \ENDFOR
    \STATE Denote the preturbed $T_y$ as $\hat{T}_{y,\secret}$. Merge $T_x$ and $\hat{T}_{y,\secret}$ to obtain $\hat{T}_{\secret}$ 
    \STATE Return $\hat{T}_{\secret}$
    \end{algorithmic}
\end{algorithm}

If we model each row of $T_y$ being an independent sample drawing from $Y$, then each row of $\hat{T}_{\secret}$ follows distribution $\hat{Y}_{\secret}$ we have defined in the previous section. 

In practice, if the data owner has a list of secrets $\secret_1, \secret_2, \dots, \secret_n$ and wants to sell a list of synthetic tables $T_1, T_2, \dots, T_n$ to $n$ different buyers, they only need to construct $M_Y(\cdot)$ and $M_Y^{-1}(\cdot)$ once. The watermark can then be inserted separately for each $T_i$.

Note that we use \( T_{ori,y} \) to construct \( M_Y(\cdot) \) and \( M_Y^{-1}(\cdot) \), but we call \( M_Y^{-1}(\cdot) \) to replace certain samples in \( T_y \). Therefore, we require \( Y \) and the underlying distribution \( Y_{ori} \) to have identical support. In other words, the data synthesizers \( S \) must neither create any new rows in \( T_y \) that do not appear in \( T_{ori,y} \) nor delete any rows from \( T_{ori,y} \), which would lead to a missing category in \( T_y \).

\subsection{Watermark Detection}

The watermark detection algorithm is more complicated than the insertion algorithm. When analyzing a table $T'$ suspected to be an illicit copy of $\hat{T}_{\secret}$, we need the following three steps. The first step is the \textbf{construction of probability vectors.} We  construct probability vectors for both $D'$ and $D$. This involves creating the injective mappings $M_D^{\secret}(\cdot)$ and $M_D^{\secret, -1}(\cdot)$. The second step is using \textbf{insertion inverse algorithm}. As the name suggests, the insertion inverse can use the output distribution from insertion algorithm and recover the input distribution $D_{inv}$. The third step is \textbf{hypothesis testing.} We compare $D$ and $D_{inv}$. If these distributions are highly similar, we conclude that $T'$ is an illicit copy of $\hat{T}_{\secret}$.

\textbf{Construction of Probability Vectors}

In the original distributions $X$ and $Y$, some x in the support of $X$ and y in the support of $Y$ may have $\Pr(Y = y|X = x) = 0$. However, it is possible that $\Pr(\tilde{Y}_{\secret} = y | X = x) > 0$. The insertion algorithm may create new pairs $(x,y)$ that never appeared in $T$. Theoretically, the inserter cannot eliminate any category from $D$ since the probability of each category changing is exactly $p_w$. However, in practice, with finite samples in $T$, it is possible that a unique row is selected for modification every time it appears, causing it to be absent from $\hat{T}_{\secret}$. This scenario becomes more likely as $p_w$ increases. This is the reason that we don't construct $M_D(\cdot)$ and $M_D^{-1}(\cdot)$ at insertion time. Then, we first construct $M_D^{\secret}(\cdot), M_D^{\secret}(\cdot)$ which could accommodate all categories of both $D$ and potential $\hat{D}_{\secret}$. This operation necessitates assigning zero probability to certain categories in both $D$ and $\hat{D}_{\secret}$.

\begin{algorithm}[H]
    \caption{\texttt{ConstructD}($ T_{ori}, T', M_Y(\cdot), \Hash_{\#(Y)}^{\secret}(\cdot)$)}
    \begin{algorithmic}[1]
    
    \STATE $M_D(\cdot),  M_D^{-1}(\cdot) =$ \texttt{Construct($T_{ori}$)}
    \FOR{Each $d$ in the domain of $M_D(\cdot)$}
        \STATE Split $d$ into $x$ and $y$ the same way we split $T$ into $T_x$ and $T_y$
        \STATE Let $y' = \Hash_{\#(Y)}^{\secret}(x)$
        \IF{$d' = (x,y')$ is not in the support of $M_D^{\secret}(\cdot)$} 
            \STATE $(M_D^{\secret}(\cdot), M_D^{\secret,-1}(\cdot))$.\texttt{update($d'$)}
        \ENDIF
    \ENDFOR
    \STATE Return $M_D^{\secret}(\cdot), M_D^{\secret,-1}(\cdot)$
    \end{algorithmic}
\end{algorithm}

With $M_D^{\secret}(\cdot), M_D^{\secret,-1}(\cdot)$, we can construct probability vector $\text{Vec}_{\secret}(D) := [d_0, d_1, \cdots, d_{\#(D)-1}]$  $D$ such that
$$\Pr(M_D^{\secret}(D) = i) = d_i.$$
Here, $\#(D)$ denotes the size of the domain of $M_D^{\secret}(\cdot)$, which may exceed the support size of $D$. 

\textbf{Insertion Inverse Algorithm}

For all $x$ in the support of $X$ and for all $y$ in the support of $Y$, we have the follolwing prperties.

If $M_Y^{-1}(\Hash^{\secret}_{\#(Y)}(x)) \neq y$, then 
\[\Pr(\hat{Y}_{\secret} = y| X= x) = (1- p_w) \Pr(Y= y | X= x).\]

This implies 
\begin{align*}
\Pr(Y = i |X= x) = \frac{\Pr(\hat{Y}_{secret} = y | X= x)}{1-p_w}
\end{align*}

If $M_Y(\Hash^{\secret}_{\#(Y)}(x)) = y$, then 

\begin{align*}
\Pr(\hat{Y}_{\secret} = y | X= x) = \Pr(Y= y |X= x)  \\
  \quad \quad  +p_w \Pr(Y \neq y| X=x). 
\end{align*}

This implies
\begin{align*}
\Pr(Y = i |X = x) = \frac{\Pr(\hat{Y}_{\secret} = y | X= x) -p_w}{1 - p_w}
\end{align*}

Notice that $\Pr(X = x, Y = i) = \Pr(Y = i | X = x) \cdot \Pr(X = x)$. As a result, as long as the data owner has the probability vector of $D'$ along with $M_D^{\secret}$, $M_Y^{\secret}$, and $M_X^{\secret}$, they can obtain the probabilities $\Pr(X' = x, Y' = y)$, $\Pr(X' = x)$, and $\Pr(Y' = y)$ for all $x$ in the support of $X$ and $y$ in the support of $Y$. Therefore, it is straightforward to calculate all $\Pr(Y' = y | X' = x)$.

Then, we summarize the insertion inverse algorithm as follows.

\begin{algorithm}[H]
    \caption{$Inserter^{-1}(\text{Vec}_{\secret}(D'); p_w, M_D^{\secret}(\cdot)$ $M_X^{\secret}(\cdot), M_Y^{\secret}(\cdot), \Hash_{\#(Y)}^{\secret }(\cdot))$}
    \begin{algorithmic}[1]
        \FOR{Each $x$ in support of $X$ and $y$ in the support of $Y$}
            \STATE Calculate $\Pr(Y' = i | X'= x)$ 
            \IF{$M_Y(\Hash^{\secret}_{\#(Y)}(x)) = y$}
                \STATE$\Pr(Y_{inv} = y |X_{inv} = x) = \frac{\Pr(Y' = y | X'= x) -p_w}{1 - p_w}$
            \ELSE
                \STATE $\Pr(Y_{inv} = y |X_{inv}= x) = \frac{\Pr(Y' = y | X'= x)}{1-p_w}$
            \ENDIF
            \STATE $\Pr(Y_{inv} = y, X_{inv}= x) = \Pr(Y_{inv} = y |X_{inv}= x) \cdot \Pr(X= x)$
        \ENDFOR
        \STATE Return the distribution $D_{inv} = (X_{inv}, Y_{inv})$
    \end{algorithmic}
\end{algorithm}

\textbf{Hypothesis Testing}

In our detection mechanism, we use the total variation distance to measure the similarity between two distributions that share the same support. 

\begin{definition}
Let $P$ and $Q$ be two categorical distributions on the same support $\mathbb{X}$. The total variation distance between $P$ and $Q$ denoted as $d_{TV}(P, Q)$ is half of the $L_1$ norm of their probability mass function
$$d_{TV}(P, Q) = \frac{1}{2} \sum_{x \in \mathbb{X}} |P(x) - Q(x)|.$$

\end{definition}

The data owner must predefine a prior distribution $Prior$, from which probability vectors of size $\#(D)$ can be sampled. This prior distribution is used in a hypothesis test to determine whether a table $T'$ is an illicit copy. 

$H_0$: The underlying distribution of $T'$ is a sample from $Prior$. 

$H_a$: The underlying distribution of $T'$ is $\hat{D}_{\secret}$.

Finally, we summarize the complete watermark detection algorithm that incorporate construction of probability vectors and insertion inverse algorithm below.

\begin{algorithm}[H]
    \caption{$Detector(T', T_{ori};pw, \Hash_{\#(Y)}^{\secret}(\cdot)), Prior)$}
    \begin{algorithmic}[1]
    
    \STATE Split $T'$ into $T'_x$ and $T'_y$ the same way we split $T_{ori}$ into $T_{ori,x}$ and $T_{ori,y}$
    \STATE $(M_Y(\cdot), M_Y^{-1}(\cdot))= $ \texttt{Construct($T_{ori,y}$)}
    \STATE $(M_D^{\secret}(\cdot),M_D^{\secret, -1}(\cdot))$ = \\ \texttt{ConstructD}$(T_{ori}, T', M_Y(\cdot), \Hash_{\#(Y)}^{\secret}(\cdot))$
    \STATE Use $T'$, $T$, and $M_D^{\secret}(\cdot)$ to build $\text{Vec}_{\secret}(D')$ and $\text{Vec}_{\secret}(D)$
    \STATE  $D_{inv} = Inserter^{-1}(\text{Vec}_{\secret}(D'); p_w, M_D^{\secret}(\cdot), \Hash_{\#(Y)}^{\secret }(\cdot))$
    \STATE Use $\text{Vec}_{\secret}(D')$ and $\text{Vec}_{\secret}(D)$ to calculate $d = d_{TV}(D, D')$
    \STATE Sample many probability vectors from $Prior$
    \STATE
    \FOR{Each probability vectors $D_{sam}$ we have sampled}
        \STATE $D_{sam, inv} = Inserter^{-1}(\text{Vec}_{\secret}(D_{sam});$ \\$ p_w, M_D^{\secret}, \Hash_{\#(Y)}^{\secret }(\cdot))$
        \STATE $d_{sam} = d_{TV}(D_{sam}, D')$
    \ENDFOR
    \STATE
    \STATE Let p-value be the proportion of $d_{sam}$ such that $d_{sam} \le d$
    \IF{p-value is smaller than a predefined siginifcance level}
        \STATE Reject $H_0$ and conclude $T'$ is an illicit copy 
    \ELSE
        \STATE Fail to reject $H_0$ and conclude is not an illicit copy
    \ENDIF
    \end{algorithmic}
\end{algorithm}

\subsection{Analysis of Watermarking Scheme}
One natural question to ask is whether the watermarking scheme preserves the utility of $T$. While demonstrating the utility of a table depends on the specific downstream task, it is challenging to design a comprehensive evaluation of the utility of a watermarking scheme. However, since the insertion of the watermark is sparse and controlled by the parameter $p_w$, $\hat{D}_{\secret}$ closely resembles $D$. Specifically, we have developed the following theorem to bound the distributional shift introduced by our watermarking scheme.

\begin{theorem}
    Let $T$ be the unwatermarked table and $\hat{T}_{\secret}$ be the watermarked table using our watermark insertion algorithm. Let $D$ and $\hat{D}_{\secret}$ be the underlying distributions of $T$ and $\hat{T}_{\secret}$. Then we have 

    $$d_{TV}(D, \hat{D}_{\secret}) \le p_w$$
\end{theorem}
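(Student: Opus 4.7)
The plan is to exploit the fact that, by construction of the insertion algorithm, the watermarked distribution $\hat{D}_{\secret}$ is a two-component mixture: with probability $1-p_w$ the row is left untouched (contributing the original joint law $D$ on $(X,Y)$), and with probability $p_w$ the $Y$-coordinate is replaced by $\tilde{Y}_{\secret}=M_Y^{-1}(\Hash_{\#(Y)}^{\secret}(X))$ (contributing the joint law of $(X,\tilde{Y}_{\secret})$). Denote this second joint distribution by $\tilde{D}_{\secret}$. Then at the level of probability mass functions,
\begin{equation*}
\hat{D}_{\secret}(x,y) \;=\; (1-p_w)\,D(x,y) \;+\; p_w\,\tilde{D}_{\secret}(x,y),
\end{equation*}
for every $(x,y)$ in a common support that accommodates both $D$ and $\tilde{D}_{\secret}$ (expanding by zero where needed, exactly as in the \texttt{ConstructD} step).

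Given this mixture identity, I would subtract $D(x,y)$ from both sides, pull out the common factor $p_w$, and plug into the definition of total variation:
\begin{align*}
d_{TV}(D,\hat{D}_{\secret})
&=\tfrac{1}{2}\sum_{x,y}\bigl|D(x,y)-\hat{D}_{\secret}(x,y)\bigr|\\
&=\tfrac{1}{2}\sum_{x,y}\bigl|p_w\,D(x,y)-p_w\,\tilde{D}_{\secret}(x,y)\bigr|\\
&=p_w\cdot d_{TV}(D,\tilde{D}_{\secret}).
\end{align*}
Since total variation distance between any two probability distributions on a common support is at most $1$, the bound $d_{TV}(D,\hat{D}_{\secret})\le p_w$ follows immediately.

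There is no real obstacle here; the only point that requires a short justification is the mixture representation itself, which I would argue as follows. By Line~1 of the \texttt{Inserter} algorithm the Bernoulli switch $B$ is drawn independently of the data, and $X$ is never altered. Conditionally on $B=1$ (probability $1-p_w$), the pair is simply $(X,Y)\sim D$; conditionally on $B=0$ (probability $p_w$), the pair is $(X,M_Y^{-1}(\Hash_{\#(Y)}^{\secret}(X)))\sim \tilde{D}_{\secret}$. Marginalizing over $B$ by the law of total probability yields the displayed mixture, after which the TV computation above is purely mechanical. The only bookkeeping subtlety is ensuring both $D$ and $\tilde{D}_{\secret}$ are written over a common finite support (possibly assigning zero mass to categories that appear in only one of them), which is exactly the role of the extended mapping $M_D^{\secret}$ introduced in Section~3.3.
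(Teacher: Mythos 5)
Your proof is correct, but it takes a genuinely different route from the paper's. The paper argues category by category: for each index $i$ in the range of $M_D^{\secret}$ it introduces the mass flowing into category $i$ ($p_i$) and out of it ($q_i$), writes $\hat{d}_i = d_i + p_i - q_i$, observes that $\sum_i p_i = \sum_i q_i \le p_w$ because only a $p_w$-fraction of rows is ever modified, and finishes with the triangle inequality $\tfrac12\sum_i|p_i-q_i| \le \tfrac12(\sum_i p_i + \sum_i q_i) \le p_w$. You instead work globally with the mixture identity $\hat{D}_{\secret} = (1-p_w)D + p_w\tilde{D}_{\secret}$, which follows from the independence of the Bernoulli switch $B$, and then get the exact relation $d_{TV}(D,\hat{D}_{\secret}) = p_w\, d_{TV}(D,\tilde{D}_{\secret}) \le p_w$ since total variation is bounded by $1$. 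Your version buys two things: it yields an equality characterization (the bound is tight exactly when $D$ and the hash-induced joint law $\tilde{D}_{\secret}$ have disjoint supports), and it sidesteps a notational wrinkle in the paper, whose $p_i,q_i$ are written as conditional probabilities but must be read as joint probabilities for the identity $\hat{d}_i = d_i + p_i - q_i$ to hold. The paper's per-category accounting, on the other hand, makes explicit where mass moves, which is the quantity reused informally elsewhere in the detection discussion. One small labeling caution: the paper's $\tilde{Y}_{\secret}$ denotes only the marginal law of $M_Y^{-1}(\Hash_{\#(Y)}^{\secret}(X))$; your $\tilde{D}_{\secret}$ must be the joint law of the pair $(X, M_Y^{-1}(\Hash_{\#(Y)}^{\secret}(X)))$ (supported on the graph of a deterministic function of $X$), not a product of marginals — you use it correctly, but it is worth stating explicitly.
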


\begin{proof}
    Let $k := \#(D) - 1$ and let $[d_0, d_1, \cdots, d_k]$ be the probability vector of $D$ such that 
    $$d_i = \Pr(M_D^{\secret}(D) = i), \ 0 \leq i \leq k.$$ 
    Similarly, we can define $[\hat{d}_0, \hat{d}_1, \cdots, \hat{d}_k]$ as the probability vector of $\hat{D}_{\secret}$

    Then, define  
    \[
    \begin{cases}
        p_i  = \Pr(M_D^{\secret}(\hat{D}_{\secret}) = i | M_D^{\secret}(D) \neq i) \\
        q_i = \Pr(M_D^{\secret}(\hat{D}_{\secret}) \neq i | M_D^{\secret}(D) = i) .
    \end{cases}
    \]

    It is easy to show that 
    $$ \sum_{i = 0}^{k} p_i = \sum_{i = 0}^{k} q_i \leq p_w$$

    Then, we have 

    \[
    \hat{d}_i = d_i +p_i - q_i. 
    \]

    Then, 
    \begin{align*}
        d_{TV}(Y, \hat{Y}) &= \frac{1}{2} \sum_{i = 0}^{k} \left| d_i - \hat{d}_i\right| \\
        &= \frac{1}{2}  \sum_{i = 0}^{k} \left| p_i - q_i\right|\\
        &\leq \frac{1}{2} \sum_{i = 0}^{k} |p_i| + |-q_i| \\
        &= \frac{1}{2} \left(\sum_{i = 0}^{k}|p_i| + \sum_{i = 0}^{k}|q_i| \right)\\
        &\leq p_w
    \end{align*}
    This completes the proof
    
\end{proof}

Another important question is whether a white-box adversarial attacker could replicate the detection algorithm, potentially enabling them to remove the watermark. Fortunately, due to the discontinuity property of one-way hash functions, it is computationally challenging to predict the output of $\Hash_{\#(Y)}^{\secret}(x)$ for any valid $x$ without knowledge of $\secret$. We formalize this intuition in the following observation.

Let $SECRET$ be a large set of strings, each of which can serve as a possible $\secret$. If we treat $SECRET$ as a uniform distribution, where each $\secret \in SECRET$ has an equal probability of being selected, then for any fixed $x$, we can view $\Hash_{\#(Y)}^{\secret}(x)$ as a random variable over the support of $Y$. As a result, $\Hash_{\#(Y)}^{\secret}(x)$ follows a uniform distribution.

While a rigorous proof would require detailed analysis of the properties of one-way hash functions and $SECRET$, we have empirically demonstrated that this assumption is valid.

We construct a big set $SECRET$ that contains many different strings. In the experiment, we first set $\#(Y)$ to be 3, 5, and 10 and we sample $12000$ strings without replacement from $SECRET$ and draw the distribution in following figures. Notice that the distribution are close to uniform distribution, which validates our assumption.
\begin{figure}[H]
    \centering
    \includegraphics[width=0.35\textwidth]{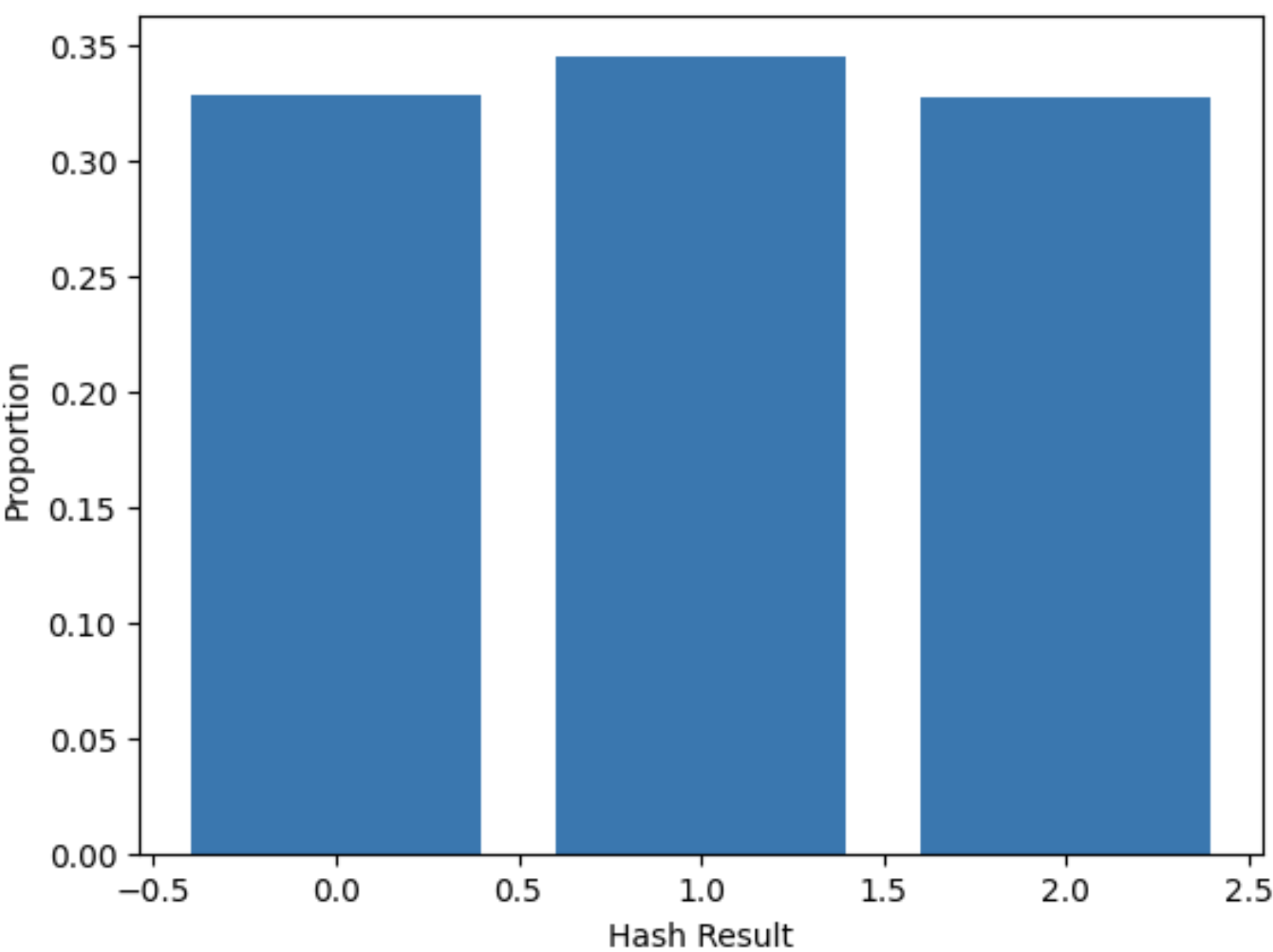}
    \caption{$\#(Y) = 3,$ samples 12000 $\secret$, $x =[2,2,2,23]$}
    \label{fig:beautiful}
\end{figure}

\begin{figure}[H]
    \centering
    \includegraphics[width=0.35\textwidth]{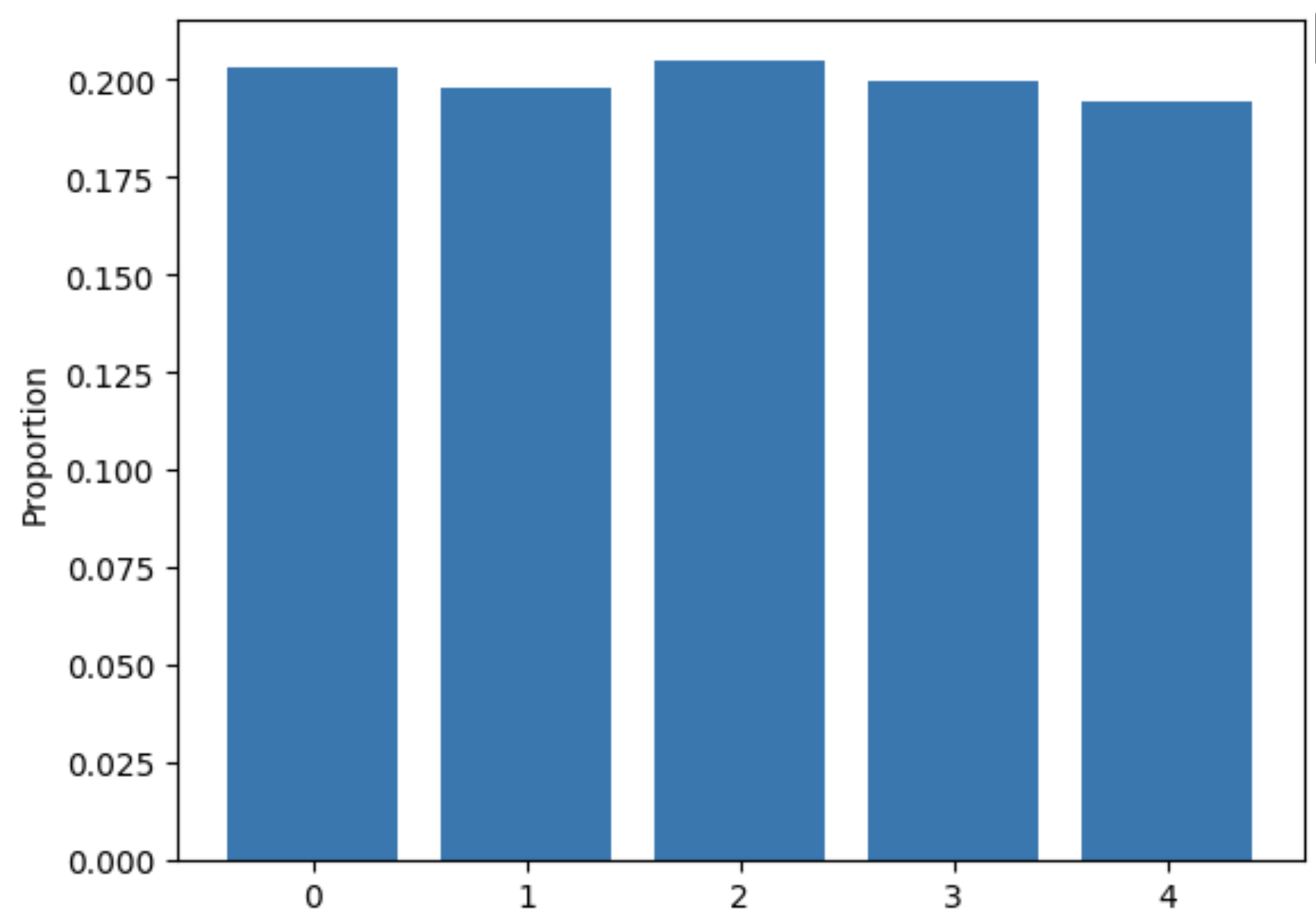}
    \caption{$\#(Y) = 5,$ samples 12000 $\secret$, $x= [2,24,2,23]$}
    \label{fig:beautiful}
\end{figure}

\begin{figure}[H]
    \centering
    \includegraphics[width=0.35\textwidth]{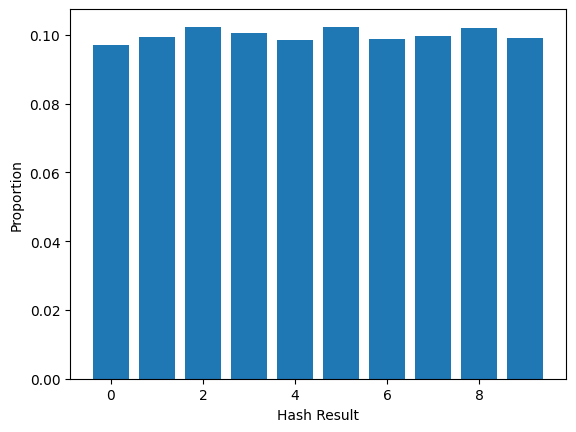}
    \caption{$\#(Y) = 10,$ samples 12000 $\secret$, $x= [233,2,2,23]$}
    \label{fig:beautiful}
\end{figure}

Then, to replicate this process, the adversary needs to correctly guess the output of $\Hash_{\#(Y)}^{\secret}(x)$ of all $x$ in the domain. With the observation above, the probability of adversary to correctly guess all output is $\frac{1}{\#(Y)^{\#(X)}}$, which is negligible when the data set has relatively high entropy. Meanwhile, since we use a bernoulli variable $B$ that is independent with our dataset to select places where watermark is inserted, it's also impossible for adversary to correctly guess the place where we change $T_y$.

In the traditional setup of watermarking problems, the owner has only one table $T_{ori}$ and a set of binary sequences $WM_1, WM_2, \cdots, WM_n$ that serve as watermarks to be inserted into $T_{ori}$. Then, the data owner distributes $n$ watermarked tables, all generated from this single $T_{ori}$. When two or more buyers compare their tables, they can identify locations where differences occur. These differences arise from the different watermark bits inserted at those positions. Consequently, the data owner faces the risk that these buyers can collectively identify all locations where watermarks have been inserted. In our case, each buyer receives different watermarked tables that are generated from different base tables. Additionally, the locations where we perturb the dataset are determined by an independent Bernoulli variable, and the watermark is implemented at the distribution-level. This makes it highly challenging for buyers to identify perturbation locations in our watermarking scheme, even if they collaborate.

Another advantage of our distribution-wise watermarking scheme is that the watermark of $\hat{T}_{\secret}$ is preserved even when buyers use it to train other synthesizers and generate new tables. We will demonstrate this property in Section 4. In contrast, traditional watermarking schemes that rely on primary keys to determine watermark bit placement face significant challenges in preserving watermarks in newly synthesized tables. This limitation arises primarily because popular tabular data synthesizers, such as TabDDPM in \cite{kotelnikov2023tabddpm} and TABSYN in \cite{zhangmixed}, require one-hot encoding for categorical features. Since primary keys contain unique values for each sample, the number of categories becomes extremely large, resulting in impractically long one-hot encoding vectors. This makes it nearly impossible for tabular synthesizers to preserve the primary key structure. Even if we use intense computation power to synthesize the primary key, the synthesizers will just draw samples from a distribution of the primary key. Since the primary key is usually something like an ID number for many data sets, it is unreasonable to assume any strong correlation or dependence between the primary key and any other features. Therefore, it is impossible for the sythesized primary key to preserve any structure that can be used in watermark detection. Meanwhile, for watermarking scheme that depends on using virtual primary key such as \cite{li2003constructing}, the construction of primary key of each sample depends on the precise values of some feautures of the table. However, the synthesizers will resample new values from the distribution of the columns and the precise values of each original sample may not be preserved anymore. 

Meanwhile, since our watermarking scheme is on the distribution-level, it is immune to any adversary attack that does not perturb the distribution such as deletion and shuffling.

\section{Experiments}
\subsection{Implementation Details}

\textbf{Hash function:} In this project, we use \texttt{hashlib.sha1()} in Python as the primary hash function, on which we build $\Hash_{\#(Y)}^{\secret}$. The output of the hash function is a binary sequence, which we can interpret as an integer. The input of $\texttt{hashlib.sha1()}$ can be any byte-like structure. In this project, all experiments are conducted using Python 3.8.8, with data stored in NumPy arrays of dtype \texttt{int64}. Using the \texttt{tobytes()} attribute of NumPy, we can efficiently convert data into byte-like structures. Similarly, Python’s \texttt{encode()} method allows us to convert strings into byte-like structures. This enables us to securely concatenate a secret string with  input data, creating a hash function that can only be accessed by the inserter and detector.  We then compute the modulus of the \texttt{hashlib.sha1()} output with $\#(Y)$ to obtain an integer value between $0$ and $\#(Y) - 1$.     

\textbf{Injective Mapping:}  The simplest approach to construct a pair of injective mappings, such as \( M_X(\cdot) \) and \( M_X^{-1}(\cdot) \), is to create a pair of look-up tables. In Python, we can achieve this by using two dictionaries. Meanwhile, it is easy for users to add new keys into dictionaries, so the \texttt{update()} function is easy to implement. It is essential that this construction of dictionary pairs is deterministic, as the data owner must construct \( M_X(\cdot) \), \( M_Y^{-1}(\cdot) \), \( M_D(\cdot) \), and \( M_D^{-1}(\cdot) \) at insertion time and be able to replicate the same mappings at detection time. In particular, the construction process must not depend on any specific ordering of samples. This means that if an attacker shuffles the order of entries and creates a new table, the data owner should still be able to construct the same mappings correctly. The easiest way to ensure this is by using the \texttt{np.unique()} function, which outputs a matrix containing all unique rows of a table in an deterministic way. 

\textbf{Prior Distribution:} In the experiment, we mainly use the Dirichlet distribution as our prior. Specifically, for any probability vector of dimension $k$, we have 
$$f(p_0, \cdots, p_{k-1}; \alpha_0, \cdots, \alpha_{k-1}) = \frac{1}{\text{B}(\alpha)} \prod_{i=0}^{K-1} p_i^{\alpha_i - 1},
$$
where $\text{B}(\alpha)$ is a multivariate beta function. Also, if $\alpha_{sum} = \sum_{i =0}^{k-1} \alpha_i$, then

$$\text{Var}[p_i] = \frac{\alpha_i/ \alpha_{sum}(1- \alpha_i/ \alpha_{sum})}{\alpha_{sum} + 1}$$

Notice that if we choose smaller values for the parameters $\alpha_i$, the variance of  $p_i$ increases. Consequently, the probability vectors sampled from the Dirichlet distribution will be more widely spread across the probability simplex.

The Dirichlet distribution is an effective prior for sampling probability vectors because it generates values that sum to one, allowing flexible control over the concentration and spread of probabilities across categories. Additionally, it is conjugate to the multinomial distribution, making it convenient for Bayesian updating in categorical data models.

\subsection{True positive Rate}
To assess the accuracy of our watermarking scheme, we simulated the distribution $D$ onto which we will embed the watermark. In the first simulation, we set $X$ and $Y$ to be independent of each other. Specifically, $X = [X_1, X_2]$, where $X_1$ and $X_2$, and $X_3$ are identical and independently distributions. Each component of $X$ has support ${0,1,2,3,4,5}$ and follows a probability vector of $[0.526, 0.263, 0.053, 0.053, 0.053, 0.053]$. Similarly, $Y = [Y_1, Y_2]$, with both $Y_1$ and $Y_2$ being identical and independent distributions, following a probability vector of $[0.077, 0.077, 0.077, 0.769]$. We draw 10,000 samples from $D$ to construct our table $T$. In the detector, we configure the Dirichlet distribution with a parameter vector where each element is set to $0.1$ and has a length equal to $M_D^{\secret}(\cdot)$. We sample $500$ probability vectors from the Dirichlet distribution. Then, we set $p_w$ to be $0.05, 0.1, 0.15$ and $0.3$ respectively.

\begin{table}[h]
    \centering
    \begin{tabular}{{|p{1.5cm}|p{1.5cm}|p{1.5cm}|p{1.7cm}|}}
    \hline
    $p_w$ & $\#(D)$ & p value & $d_{TV}(D_{inv}, D)$\\
    \hline
    $0.05$ & $559$ & $0$ &  $0.0047$\\
    \hline
    $0.10$&  $559$ & $0$ & $0.0067$ \\
    \hline
    $0.15$&  $559$ & $0$ & $0.0073$ \\
    \hline
    $0.3$&  $559$ & $0$ & $0.014$ \\
    \hline
    \end{tabular}
    \caption{Simulation 1}
    \label{tab:sample_table}
\end{table}

The second simulation is quite similar to the first one except that $X = [X_1, X_2, X_3]$ and $Y = [Y_1, Y_2]$, where each column of $X$ and each column of $Y$ all follows the same marginal distribution with that of simulation 1. Keeping everything else the same, we have the following result. 
\begin{table}[h]
    \centering
    \begin{tabular}{{|p{1.5cm}|p{1.5cm}|p{1.5cm}|p{1.7cm}|}}
    \hline
    $p_w$ & $\#(D)$ & p value & $d_{TV}(D_{inv}, D)$\\
    \hline
    $0.05$ & $2322$ & $0$ &  $0.0099$\\
    \hline
    $0.10$&  $2336$ & $0$ & $0.014$ \\
    \hline
    $0.15$&  $2336$ & $0$ & $0.017$ \\
    \hline
    $0.3$&  $2343$ & $0$ & $0.027$ \\
    \hline
    \end{tabular}
    \caption{Simulation 2}
    \label{tab:sample_table}
\end{table}

In the third simulation, we generate a distribution where columns are not independent. Specifically, let $X_1$ follow a uniform distribution on the support $\{0,1,2,3\}$ and let $\epsilon$ be an independent distribution on the support $\{0,1\}$ with probability vector $[0.8, 0.2]$. Then, $X_2 \stackrel{d}{=} X_1$ with probability $0.9$ and $X_2 \stackrel{d}{=} \epsilon$ with probability $0.1$. Additionally, $X_3 = \min(3, X_1 + \epsilon)$. $X_4$ is independent of $X_1$, $X_2$, and $X_3$ and follows a uniform distribution on the support $\{0,1,2\}$. We define $X = [X_1, X_2, X_3, X_4]$. Furthermore, $Y_1 \stackrel{d}{=} X_4 + \epsilon$ with probability $0.9$ and $Y_1 \stackrel{d}{=} \epsilon$ with probability $0.1$. Finally, $Y_2 \stackrel{d}{=} Y_1$ with probability $0.9$ and $Y_2 \stackrel{d}{=} \epsilon$ with probability $0.1$, and we define $Y = [Y_1, Y_2]$. Keeping all other parameters identical to simulations 1 and 2, we obtain the following results.

\begin{table}[h]
    \centering
    \begin{tabular}{{|p{1.5cm}|p{1.5cm}|p{1.5cm}|p{1.7cm}|}}
    \hline
    $p_w$ & $\#(D)$ & p value & $d_{TV}(D_{inv}, D)$\\
    \hline
    $0.05$ & $122$ & $0$ &  $0.0081$\\
    \hline
    $0.10$&  $122$ & $0$ & $0.014$ \\
    \hline
    $0.15$&  $125$ & $0$ & $0.014$ \\
    \hline
    $0.3$&  $125$ & $0$ & $0.022$ \\
    \hline
    \end{tabular}
    \caption{Simulation 3}
    \label{tab:sample_table}
\end{table}

As a result, all three simulations demonstrate that our watermarking scheme has a highly reliable true positive rate. This is because the insertion inverse algorithm is quite robust. Even when the watermark probability $p_w$ is relatively high, the insertion inverse is still able to transform $\hat{D}_{\secret}$ back to $D_{inv}$, which is highly similar to the original $D$. Particularly, the insertion inverse algorithm produces more accurate results when the number of categories in $D$ is not excessively large. This is to be expected, as we rely on the empirical distribution of $\hat{D}_{\secret}$, and it becomes more challenging for the empirical distribution to faithfully capture the true distribution as the number of categories increases.

Simulation 4 tests whether we can still detect the watermark if buyers use tabular synthesizers to generate a new table. Let $X_1$ and $X_2$ be identical and independent uniform distributions on the support $\{0,1,2\}$. Additionally, let $\epsilon_1$ and $\epsilon_2$ be two identical and independent distributions on the support $\{0,1\}$ with probability vector $[0.95,0.05]$. Define $X_3 = X_1 - X_2 + \epsilon_1$. Further, let $Y_1$ be an independent Bernoulli distribution with probability vector $[0.5, 0.5]$, and $Y_2 = \max(1, X_1 - \epsilon_2)$.
We sample 5000 instances from this joint distribution to form the original table $T$, and set the watermark probability $p_w = 0.05$ to obtain the watermarked table $\hat{T}{\secret}$. Then, we use TabSyn from \cite{zhangmixed} to generate the attacked table $T_{attack}$ with 4500 samples using $\hat{T}{\secret}$. Finally, we apply the detector to $T_{attack}$.
TabSyn preserves the fidelity relatively well, as evidenced by the total variation distance between $T_{attack}$ and $\hat{T}{\secret}$ being only $0.069$. Additionally, the total variation distance between $D_{inv}$ and $D$ is 0.056, which yields a p-value of $0$.

Another aspect of testing the true positive rate is the robustness of our watermarking scheme. In the previous section, we have argued that our watermarking scheme is immune to any attack that doesn't change the distribution. However, if the attacker were to randomly replace some samples of the table with samples from a uniform distribution sharing the same support, then the attacker would have perturbed the distribution, which might affect the effectiveness of our watermarking scheme. We call this the replacement attack.
Let's denote the probability of the attacker replacing an original sample with a new sample as $\beta$. Let $d^0 = [d_0, d_1, \cdots, d_n]$ denote the probability vector of $\hat{D}_{\secret}$. Additionally, let $\epsilon = [\frac{1}{n}, \frac{1}{n}, \cdots, \frac{1}{n}]$ denote the probability vector of an independent uniform distribution on the same support.
Then, after one round of attack, the probability vector of the new attacked distribution $D_{attacked, 1}$ is:
$$ d^1 = (1- \beta)d^0 + \beta \epsilon.$$
Notice that the attacker may perform multiple attacks, so we obtain the probability vector of $D_{attacked, m}$ as:
$$d^m = (1- \beta)d^{m-1} + \beta \epsilon.$$

In simulation 5, we simulate this replacement attack for $800$ rounds of attack. We set $\beta = 0.98$, and Dirichlet distribution with parameter vector where each element is set to $0.03$ and has a length equal to $M_D^{\secret}$. We draw $500$ samples from the Dirichlet distribution for hypothesis testing. Specifically, we have $X_1$ and $X_2$ being identical and independent distributions on the support $\{0,1,2,3,4\}$ with probability vector $[0.375, 0.25,  0.125, 0.125, 0.125]$. $X_3 \stackrel{d}{=} \min(6, X_1 + X_2)$ with probability $0.9$, and $X_3 = 0$ with probability $0.1$. $X_4\stackrel{d}{=} X_1$ with probability $0.8$, $X_4  \stackrel{d}{=} X_2$ with probability $0.1$, and $X_4  \stackrel{d}{=} X_3$ with probability $0.1$. $X_5\stackrel{d}{=} \max(0,X_4 - X_1)$ with probability $0.9$ and $X_5 = 5$ with probability $0.1$. Define $noise_1$ to be independent distribution on support $\{0,1,2\}$ with probability vector $[0.92, 0.04, 0.04]$. Similarly define $noise_2$ to be independent distribution on support  $\{0,1\}$ with probability vector $[0.8,0.2]$. Then, define
\[
\begin{cases}
    Y_1 = \min(7, X_1+X_2 + noise_1) \\
    Y_2 = \min(7, X_3 + X_4 + noise_1) \\
    Y_3 = \max(0,Y_2 - X_5 - noise_1) \\
    Y_4 = \min(0,Y_2- X_4+ noise_2) \\
\end{cases}
\]
We draw $50,000$ samples from this distribution and set $p_w$ to be $0.05$ and $0.1$ respectively. Let $D_{attack,m}^{-1}$ denotes the distribution obtained after puting $D_{attack,m}$ into the insertion inverse algorithm. We obtain the following result.

\begin{figure}[H]
    \centering
    \includegraphics[width=0.35\textwidth]{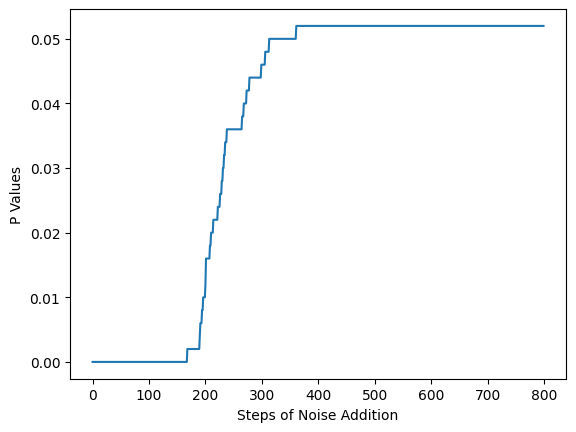}
    \caption{p values of replacement attack $p_w = 0.05$}
    \label{fig:beautiful}
\end{figure}

\begin{figure}[H]
    \centering
    \includegraphics[width=0.35\textwidth]{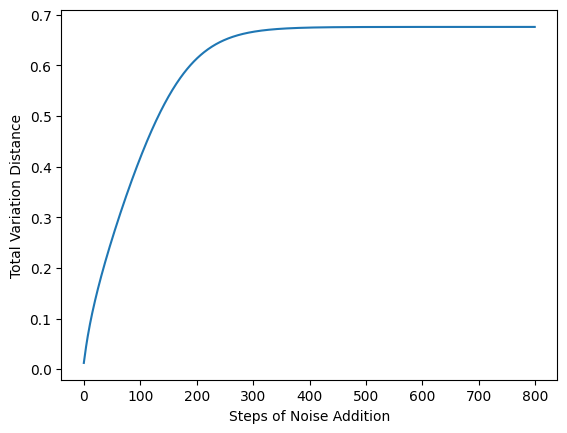}
    \caption{$d_{TV}(D, D_{attack,m}^{-1})$ of replacement attack $p_w = 0.05$}
    \label{fig:beautiful}
\end{figure}

\begin{figure}[H]
    \centering
    \includegraphics[width=0.35\textwidth]{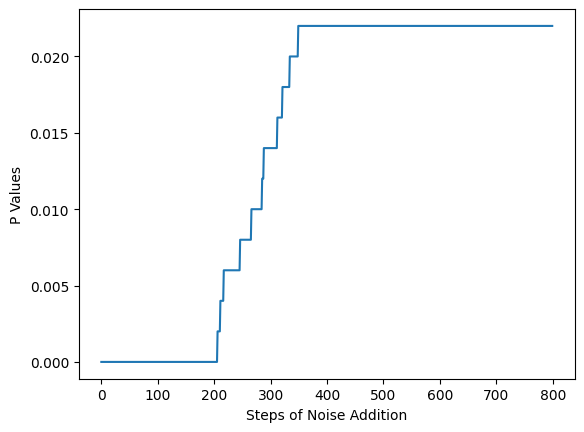}
    \caption{p values of replacement attack $p_w = 0.1$}
    \label{fig:beautiful}
\end{figure}

\begin{figure}[H]
    \centering
    \includegraphics[width=0.35\textwidth]{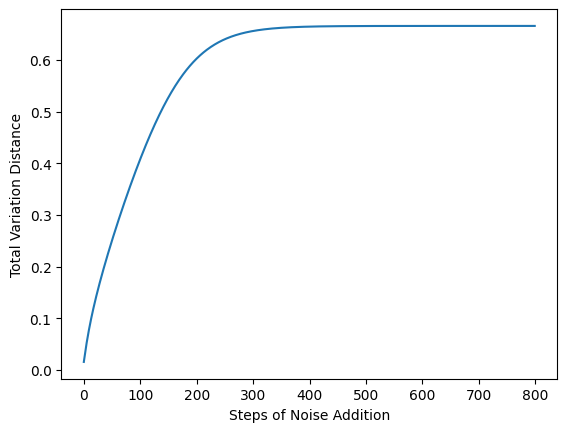}
    \caption{$d_{TV}(D, D_{attack,m}^{-1})$ of replacement attack $p_w = 0.1$}
    \label{fig:beautiful}
\end{figure}

Notice that in the first hundred iterations of the attack, the total variation distance between $D_{attack, m}^{-1}$ and $D$ remains relatively small. This demonstrates the robustness of our watermarking scheme. After 200 iterations, the total variation distance increases to $0.5$, suggesting that the detector should be cautious when concluding that the distribution is still an illicit copy. This observation is reasonable, as multiple iterations of the attack introduce more uniform noise into the distribution than the original signal. Consequently, our detector should avoid classifying a noisy distribution as an illicit copy.

\subsection{False Positive Rate}

If the data owner is confident in the effectiveness of the prior distribution, then by the nature of hypothesis testing, the false positive rate is bounded by the significance level. However, finding a single effective prior for all types of data is challenging. Using the same data of simulation 5, we evaluated the effectiveness of a Dirichlet distribution with a parameter vector where each element is set to $0.03$ and has a length equal to $M_D^{\secret}$. The detector's thresholds were set at $p_w = 0.05$ and $p_w = 0.1$. Instead of sampling probability vectors directly from the Dirichlet distribution, we constructed two different prior distributions, $Prior_1$ and $Prior_2$. In $Prior_1$, we sampled probability vectors by first creating a random vector of length $\#(D)$, where each element is independently sampled from a uniform distribution over the integers from $1$ to $16$. We then normalized this vector by dividing each element by its $L_1$ norm. In $Prior_2$, we followed the same process to construct a random vector, but then applied a softmax operation to convert the random vector into a probability vector. Then, we sample from these $1000$ probability vectors respectively from $Prior_1$ and $Prior_2$ and test the percentage of which the detector sets it to be positive, and we obtain the following result, letting the significance level to be $0.01$. 

\begin{table}[h]
    \centering
    \begin{tabular}{{|p{1.5cm}|p{1.7cm}|p{1.7cm}|}}
    \hline
    $p_w$ & $Prior_1$  FPR & $Prior_2$ FPR  \\
    \hline
    $0.05$ & $0.002$ & $0.017$ \\
    \hline 
    $0.1$ & $0.015$ & $0.009$ \\
    \hline
    \end{tabular}
    \caption{Simulation 6}
    \label{tab:sample_table}
\end{table}

The reason for setting the Dirichlet distribution's parameter to a small value is that smaller parameters yield greater variance when sampling probability vectors. If the sampled probability vectors are more dispersed on the simplex, we can better control the false positive rate. Given that the insertion inverse algorithm is highly effective, the detector is less likely to be compromised when using a prior distribution with high variance. Therefore, it is advisable for the data owner to test the false positive rate and design a robust prior distribution before selling watermarked data to different buyers. The data owner can even design a mixture distribution as prior, from which the probability vectors can evenly distributed on the probability simplex.

\section{Improvement}

In this section, we introduce two improved watermarking schemes based on the methods described in Section 3. The first method, called the sparse-column method, is designed to watermark a categorical distribution with a large number of categories. The second method, referred to as the pseudorandom mapping watermarking, is inspired by \cite{fu-etal-2024-gumbelsoft}. The pseudorandom mapping watermarking aims to reduce the fidelity damage caused by the watermarking scheme.

\subsection{Sparse-column method}

In Section 4.2, we observe that, when the number of samples in a table is fixed, the insertion inverse algorithm performs poorly if the number of categories is too large. Consequently, the total variation distance between $D_{inv}$ and $D$ increases, potentially leading to a poor true positive rate. Therefore, it is preferable to ignore some columns when inserting and detecting the watermark, which reduces the number of possible categories. Motivated by this, we designed the sparse-column method.

Unlike the original method, which splits $D$ column-wise into $X$ and $Y$, the sparse-column method divides $D$ into $tX$, $Z$, and $tY$. We assume that $Z$ consists of columns containing important information, which replacement attackers will avoid altering. Additionally, the mutual information between $Z$ and the other two distributions should be relatively high to ensure that data synthesizers do not disrupt the relationship between $Z$, $tX$, and $tY$.

Unlike the original method, which only takes $p_w$ and a single $\secret$ as parameters, the sparse-column method requires $p_w$, $\secret_1$, $\secret_2$, $\secret_3$, $x_{dim}$, and $y_{dim}$. Here, $p_w$ still represents the sparsity of the watermark, while $\secret_1$, $\secret_2$, and $\secret_3$ are strings used to construct three distinct hash functions.  Let $M$ and $N$ denote the dimensions of the distribution $tX$, with $x_{dim}$ being an integer value between $1$ and the dimension of $M$. Similarly, $y_{dim}$ is an integer between $1$ and the dimension of $N$. Later, we will see that $x_{dim}$ and $y_{dim}$ represent the number of columns that will be preserved in $tX$ and $tY$, respectively.

With these parameters, we develop the following XYExtraction Algorithm, from which we extract distributions $X$ and $Y$ from $tX$ and $tY$.

\begin{algorithm}[H]
    \caption{XYExtractor($T$; $\secret_1$, $\secret_2$, $x_{dim}$, $y_{dim}$)}
    \begin{algorithmic}[1]

    \STATE Split $T$ column-wise into $T_{tx}$, $T_{ty}$, and $T_z$
    \STATE Let $M$ denotes the number of column $T_{tx}$ has
    \STATE Let $N$ denotes the number of column $T_{ty}$ has
    \STATE $\alpha := \binom{M}{x_{dim}}$, $\beta:= \binom{N}{y_{dim}}$ 
    \STATE 
    \STATE Enumerate each combination of $x_{dim}$ columns in $T_{tx}$ from $0$ to $\alpha -1$
    \STATE Enumerate each combination of $y_{dim}$ columns in $T_{tx}$ from $0$ to $\beta -1$
    \STATE
    \FOR{Each sample $(tx, z, ty)$ in  $T$}
        \STATE Let $idx$ be the number $\Hash_{\alpha}^{\secret_1}(z)$ combination of columns in $tx$
        \STATE Let $idy$ be the nuumber $\Hash_{\beta}^{\secret_2}(z)$ combination of columns in $ty$
        \STATE let $tx[idx]$ be a sample for $T_x$ and $ty[idy]$ be a sample for $T_y$
    \ENDFOR 
    \STATE
    \STATE Combined $T_x$, $T_y$, and $T_z$ as $T_{simple}$
    \STATE Return $T_{simple}$
    \end{algorithmic}
\end{algorithm}

The rest of the watermark insertion scheme follows the exact same procedure with the regular method using $T_{simple}$ and $\secret_3$ instead of $T$ and $\secret$. Likewise, the detector needs to call the XYExtractor algorithm to extract a simple table to be tested.

\subsection{Pseudorandom Mapping Watermarking}
Note that our watermarking scheme preserves the distribution of $X$ while perturbing the distribution of $Y$. However, because of certain downstream tasks or applications that rely on the distribution of $Y$, some buyers may prefer a watermarking scheme that also preserves the marginal distribution of $Y$. In other words, we may need a watermarking scheme that maintains the marginal distributions of both $X$ and $Y$, while only altering the relationship between them. It turns out that we can achieve this goal with a slight modification to our ordinary watermarking scheme from Section 3.

Consider a pseudorandom generator $\Gen_{\secret}(\cdot)$ that outputs a number in $[0,1]$, appearing as if it follows a continuous uniform distribution without access to the key.
Specifically, $\Gen_{\secret}(x)$ uses $\Hash^{\secret}(x)$ as the seed when generating the output. Note that here, the output of $\Hash^{\secret}(x)$ is a binary sequence, unlike the integer format used in previous sections. Let $[y_0, y_1, \dots, y_n]$ be the probability vector of $Y$.

Then, we define $\AdvHash_{\secret}(x):= M_Y^{-1}(k)$ if $\Gen_{\secret}(x) \in [\sum_{i= 0}^{k-1}y_i, \sum_{i=0}^{k}y_i) $. Then, the pseudorandom mapping watermarking follows the exact same procedure with the regular watermarking scheme but replacing all $\Hash_{\#(Y)}^{\secret}(\cdot)$ with $\AdvHash_{\secret}(\cdot)$.

Notice that if we let $\tilde{Y}_{\secret} = \AdvHash_{\secret}(X)$ has the exact same marginal distribution with $Y$. The probability of a $\Gen_{\secret}(x) \in [\sum_{i= 0}^{k-1}y_i, \sum_{i=0}^{k}y_i)$ is equal to $y_k$ for all $k$. Thus, the marginal distributions of $\tilde{Y}_{\secret}$ and $\hat{Y}_{\secret}$ are exactly same with $Y$.

\section{Conclusion}
 Our study introduces a novel distribution-level watermarking framework for generative categorical data, addressing the challenges unique to the era of synthetic data generation. By inserting watermarks systematically into the underlying data distribution and employing robust statistical verification techniques, our method ensures high reliability. The empirical and theoretical validations highlight the framework's robustness against transformations commonly encountered in generative models, such as data regeneration and attacks on watermark integrity. Moreover, we have extended our approach with two advanced methods: the sparse-column method, which improves performance for datasets with large categorical spaces, and the pseudorandom mapping watermarking technique, which preserves the marginal distributions of the data while maintaining watermark reliability. This contribution lays a foundation for advancing trustworthy and traceable synthetic data generation, aligning with the growing need for transparency and accountability in AI-driven systems.

\ifCLASSOPTIONcompsoc
 
\else
  
\fi

\bibliographystyle{IEEEtran}
\bibliography{ref}

\begin{thebibliography}{10}
\providecommand{\url}[1]{#1}
\csname url@samestyle\endcsname
\providecommand{\newblock}{\relax}
\providecommand{\bibinfo}[2]{#2}
\providecommand{\BIBentrySTDinterwordspacing}{\spaceskip=0pt\relax}
\providecommand{\BIBentryALTinterwordstretchfactor}{4}
\providecommand{\BIBentryALTinterwordspacing}{\spaceskip=\fontdimen2\font plus
\BIBentryALTinterwordstretchfactor\fontdimen3\font minus \fontdimen4\font\relax}
\providecommand{\BIBforeignlanguage}[2]{{%
\expandafter\ifx\csname l@#1\endcsname\relax
\typeout{** WARNING: IEEEtran.bst: No hyphenation pattern has been}%
\typeout{** loaded for the language `#1'. Using the pattern for}%
\typeout{** the default language instead.}%
\else
\language=\csname l@#1\endcsname
\fi
#2}}
\providecommand{\BIBdecl}{\relax}
\BIBdecl

\bibitem{goodfellow2014generative}
I.~Goodfellow, J.~Pouget-Abadie, M.~Mirza, B.~Xu, D.~Warde-Farley, S.~Ozair, A.~Courville, and Y.~Bengio, ``Generative adversarial nets,'' \emph{Advances in neural information processing systems}, vol.~27, 2014.

\bibitem{song2020score}
Y.~Song, J.~Sohl-Dickstein, D.~P. Kingma, A.~Kumar, S.~Ermon, and B.~Poole, ``Score-based generative modeling through stochastic differential equations,'' \emph{arXiv preprint arXiv:2011.13456}, 2020.

\bibitem{song2023consistency}
Y.~Song, P.~Dhariwal, M.~Chen, and I.~Sutskever, ``Consistency models,'' in \emph{International Conference on Machine Learning}.\hskip 1em plus 0.5em minus 0.4em\relax PMLR, 2023, pp. 32\,211--32\,252.

\bibitem{achiam2023gpt}
J.~Achiam, S.~Adler, S.~Agarwal, L.~Ahmad, I.~Akkaya, F.~L. Aleman, D.~Almeida, J.~Altenschmidt, S.~Altman, S.~Anadkat \emph{et~al.}, ``Gpt-4 technical report,'' \emph{arXiv preprint arXiv:2303.08774}, 2023.

\bibitem{dubey2024llama}
A.~Dubey, A.~Jauhri, A.~Pandey, A.~Kadian, A.~Al-Dahle, A.~Letman, A.~Mathur, A.~Schelten, A.~Yang, A.~Fan \emph{et~al.}, ``The llama 3 herd of models,'' \emph{arXiv preprint arXiv:2407.21783}, 2024.

\bibitem{li2024pre}
J.~Li, T.~Tang, W.~X. Zhao, J.-Y. Nie, and J.-R. Wen, ``Pre-trained language models for text generation: A survey,'' \emph{ACM Computing Surveys}, vol.~56, no.~9, pp. 1--39, 2024.

\bibitem{yu2022latent}
P.~Yu, S.~Xie, X.~Ma, B.~Jia, B.~Pang, R.~Gao, Y.~Zhu, S.-C. Zhu, and Y.~Wu, ``Latent diffusion energy-based model for interpretable text modeling.'' in \emph{International Conference on Machine Learning (ICML 2022).}, 2022.

\bibitem{xu2019modeling}
L.~Xu, M.~Skoularidou, A.~Cuesta-Infante, and K.~Veeramachaneni, ``Modeling tabular data using conditional gan,'' \emph{Advances in neural information processing systems}, vol.~32, 2019.

\bibitem{kotelnikov2023tabddpm}
A.~Kotelnikov, D.~Baranchuk, I.~Rubachev, and A.~Babenko, ``Tabddpm: Modelling tabular data with diffusion models,'' in \emph{International Conference on Machine Learning}.\hskip 1em plus 0.5em minus 0.4em\relax PMLR, 2023, pp. 17\,564--17\,579.

\bibitem{zhangmixed}
H.~Zhang, J.~Zhang, Z.~Shen, B.~Srinivasan, X.~Qin, C.~Faloutsos, H.~Rangwala, and G.~Karypis, ``Mixed-type tabular data synthesis with score-based diffusion in latent space,'' in \emph{The Twelfth International Conference on Learning Representations}.

\bibitem{ouyang2023missdiff}
Y.~Ouyang, L.~Xie, C.~Li, and G.~Cheng, ``Missdiff: Training diffusion models on tabular data with missing values,'' \emph{arXiv preprint arXiv:2307.00467}, 2023.

\bibitem{kinakh2024tabular}
V.~Kinakh and S.~Voloshynovskiy, ``Tabular data generation using binary diffusion,'' \emph{arXiv preprint arXiv:2409.13882}, 2024.

\bibitem{potluru2023synthetic}
V.~K. Potluru, D.~Borrajo, A.~Coletta, N.~Dalmasso, Y.~El-Laham, E.~Fons, M.~Ghassemi, S.~Gopalakrishnan, V.~Gosai, E.~Krea{\v{c}}i{\'c} \emph{et~al.}, ``Synthetic data applications in finance,'' \emph{arXiv preprint arXiv:2401.00081}, 2023.

\bibitem{chen2021synthetic}
R.~J. Chen, M.~Y. Lu, T.~Y. Chen, D.~F. Williamson, and F.~Mahmood, ``Synthetic data in machine learning for medicine and healthcare,'' \emph{Nature Biomedical Engineering}, vol.~5, no.~6, pp. 493--497, 2021.

\bibitem{wen2023treerings}
\BIBentryALTinterwordspacing
Y.~Wen, J.~Kirchenbauer, J.~Geiping, and T.~Goldstein, ``Tree-rings watermarks: Invisible fingerprints for diffusion images,'' in \emph{Thirty-seventh Conference on Neural Information Processing Systems}, 2023. [Online]. Available: \url{https://openreview.net/forum?id=Z57JrmubNl}
\BIBentrySTDinterwordspacing

\bibitem{zhao2023provable}
X.~Zhao, P.~V. Ananth, L.~Li, and Y.-X. Wang, ``Provable robust watermarking for ai-generated text,'' in \emph{The Twelfth International Conference on Learning Representations}, 2023.

\bibitem{kirchenbauer2023watermark}
J.~Kirchenbauer, J.~Geiping, Y.~Wen, J.~Katz, I.~Miers, and T.~Goldstein, ``A watermark for large language models,'' in \emph{International Conference on Machine Learning}.\hskip 1em plus 0.5em minus 0.4em\relax PMLR, 2023, pp. 17\,061--17\,084.

\bibitem{christ2024undetectable}
M.~Christ, S.~Gunn, and O.~Zamir, ``Undetectable watermarks for language models,'' in \emph{The Thirty Seventh Annual Conference on Learning Theory}.\hskip 1em plus 0.5em minus 0.4em\relax PMLR, 2024, pp. 1125--1139.

\bibitem{he2024watermarking}
H.~He, P.~Yu, J.~Ren, Y.~N. Wu, and G.~Cheng, ``Watermarking generative tabular data,'' \emph{arXiv preprint arXiv:2405.14018}, 2024.

\bibitem{zheng2024tabularmark}
Y.~Zheng, H.~Xia, J.~Pang, J.~Liu, K.~Ren, L.~Chu, Y.~Cao, and L.~Xiong, ``Tabularmark: Watermarking tabular datasets for machine learning,'' \emph{arXiv preprint arXiv:2406.14841}, 2024.

\bibitem{tang2024ripple}
J.~Tang, ``Ripple watermarking for latent tabular diffusion models,'' 2024.

\bibitem{li2004tamper}
Y.~Li, H.~Guo, and S.~Jajodia, ``Tamper detection and localization for categorical data using fragile watermarks,'' in \emph{Proceedings of the 4th ACM workshop on Digital rights management}, 2004, pp. 73--82.

\bibitem{agrawal2002watermarking}
R.~Agrawal and J.~Kiernan, ``Watermarking relational databases,'' in \emph{VLDB'02: Proceedings of the 28th International Conference on Very Large Databases}.\hskip 1em plus 0.5em minus 0.4em\relax Elsevier, 2002, pp. 155--166.

\bibitem{li2023plmmark}
P.~Li, P.~Cheng, F.~Li, W.~Du, H.~Zhao, and G.~Liu, ``Plmmark: a secure and robust black-box watermarking framework for pre-trained language models,'' in \emph{Proceedings of the AAAI Conference on Artificial Intelligence}, vol.~37, no.~12, 2023, pp. 14\,991--14\,999.

\bibitem{fernandez2023three}
P.~Fernandez, A.~Chaffin, K.~Tit, V.~Chappelier, and T.~Furon, ``Three bricks to consolidate watermarks for large language models,'' in \emph{2023 IEEE International Workshop on Information Forensics and Security (WIFS)}.\hskip 1em plus 0.5em minus 0.4em\relax IEEE, 2023, pp. 1--6.

\bibitem{aaronson2023watermarking}
S.~Aaronson and H.~Kirchner, ``Watermarking gpt outputs,'' 2023.

\bibitem{zhao2024permute}
X.~Zhao, L.~Li, and Y.-X. Wang, ``Permute-and-flip: An optimally robust and watermarkable decoder for llms,'' \emph{arXiv preprint arXiv:2402.05864}, 2024.

\bibitem{fu-etal-2024-gumbelsoft}
\BIBentryALTinterwordspacing
J.~Fu, X.~Zhao, R.~Yang, Y.~Zhang, J.~Chen, and Y.~Xiao, ``{G}umbel{S}oft: Diversified language model watermarking via the {G}umbel{M}ax-trick,'' in \emph{Proceedings of the 62nd Annual Meeting of the Association for Computational Linguistics (Volume 1: Long Papers)}, L.-W. Ku, A.~Martins, and V.~Srikumar, Eds.\hskip 1em plus 0.5em minus 0.4em\relax Bangkok, Thailand: Association for Computational Linguistics, Aug. 2024, pp. 5791--5808. [Online]. Available: \url{https://aclanthology.org/2024.acl-long.315}
\BIBentrySTDinterwordspacing

\bibitem{wen2024tree}
Y.~Wen, J.~Kirchenbauer, J.~Geiping, and T.~Goldstein, ``Tree-rings watermarks: Invisible fingerprints for diffusion images,'' \emph{Advances in Neural Information Processing Systems}, vol.~36, 2024.

\bibitem{fernandez2023stable}
P.~Fernandez, G.~Couairon, H.~J{\'e}gou, M.~Douze, and T.~Furon, ``The stable signature: Rooting watermarks in latent diffusion models,'' in \emph{Proceedings of the IEEE/CVF International Conference on Computer Vision}, 2023, pp. 22\,466--22\,477.

\bibitem{li2005fingerprinting}
Y.~Li, V.~Swarup, and S.~Jajodia, ``Fingerprinting relational databases: Schemes and specialties,'' \emph{IEEE Transactions on Dependable and Secure Computing}, vol.~2, no.~1, pp. 34--45, 2005.

\bibitem{xiao2007second}
X.~Xiao, X.~Sun, and M.~Chen, ``Second-lsb-dependent robust watermarking for relational database,'' in \emph{Third international symposium on information assurance and security}.\hskip 1em plus 0.5em minus 0.4em\relax IEEE, 2007, pp. 292--300.

\bibitem{li2003constructing}
Y.~Li, V.~Swarup, and S.~Jajodia, ``Constructing a virtual primary key for fingerprinting relational data,'' in \emph{Proceedings of the 3rd ACM workshop on Digital rights management}, 2003, pp. 133--141.

\bibitem{gort2020double}
M.~L.~P. Gort, C.~Feregrino-Uribe, A.~Cortesi, and F.~Fern{\'a}ndez-Pe{\~n}a, ``A double fragmentation approach for improving virtual primary key-based watermark synchronization,'' \emph{IEEE Access}, vol.~8, pp. 61\,504--61\,516, 2020.

\bibitem{sebe2005noise}
F.~Seb{\'e}, J.~Domingo-Ferrer, and A.~Solanas, ``Noise-robust watermarking for numerical datasets,'' in \emph{Modeling Decisions for Artificial Intelligence: Second International Conference, MDAI 2005, Tsukuba, Japan, July 25-27, 2005. Proceedings 2}.\hskip 1em plus 0.5em minus 0.4em\relax Springer, 2005, pp. 134--143.

\bibitem{ren2023robust}
Z.~Ren, H.~Fang, J.~Zhang, Z.~Ma, R.~Lin, W.~Zhang, and N.~Yu, ``A robust database watermarking scheme that preserves statistical characteristics,'' \emph{IEEE Transactions on Knowledge and Data Engineering}, 2023.

\bibitem{ngo2024adaptive}
D.~D. Ngo, D.~Scott, S.~Obitayo, V.~K. Potluru, and M.~Veloso, ``Adaptive and robust watermark for generative tabular data,'' \emph{arXiv preprint arXiv:2409.14700}, 2024.

\end{thebibliography}

\end{document}